\newtheorem{theorem}{Theorem}[section]
\newtheorem{lemma}[theorem]{Lemma}
\theoremstyle{definition}
\theoremstyle{remark}
\theoremstyle{Conjecture/open problem}
\theoremstyle{assumption}
\newtheorem{assumption}{Assumption}
\theoremstyle{conjecture}
\def\1{\mathbf{1}}
\def\p{^{\prime}}
\def\E{\mathbb{E}}
\def\N{\mathbb{N}}
\numberwithin{equation}{section}
\begin{document}

\title{On misconceptions about the Brier score in binary prediction models
}


\author[Linard Hoessly]{Linard Hoessly}

\address{Data Center of the Swiss Transplant Cohort Study, University Basel \& university hospital Basel, Basel, 4031 Switzerland}
\email{linard.hoessly@hotmail.com}

\date{}

\keywords{}

\date{}

\begin{abstract}
The Brier score is a widely used metric evaluating overall performance of probabilistic predictions for binary outcomes in clinical research. However, its interpretation can be complex, as it does not align with commonly taught concepts in medical statistics. Consequently, the Brier score is often misinterpreted, sometimes to a significant extent, a fact that has not been adequately addressed in the literature. We aim to explore prevalent misconceptions surrounding the Brier score and elucidate the understanding and interpretation of Brier scores for practitioners.
 \end{abstract}

\maketitle

\section{Introduction: What is the Brier score?}\label{intro}

The Brier score \cite{Brier_original} is a widely used metric evaluating the accuracy of probabilistic predictions in binary outcomes for clinical research \cite{steyerberg2019clinical,harrell2015regression}. It assesses the overall performance of prediction models that estimate the likelihood of medical outcomes like disease progression or treatment response.

 Given probabilistic predictions $p_i$ and observed outcomes $y_i$, the Brier score is defined as:
\begin{equation}\label{Brier} BS(p, y) = \frac{1}{n} \sum_{i=1}^{n} (p_i - y_i)^2. \end{equation}

where:
\begin{itemize}
\item $n$ is the total number of predictions and observations,
\item $p_i$ represents the predicted probability of an event occurring for the $i$-th case (e.g., the probability of a patient developing a condition),
\item $y_i$ is the actual observed outcome (coded as 1 if the event occurred, 0 if it did not).
\end{itemize}

Typically, the $y_i$ in such prediction models are assumed to be realisations of independent Bernoulli random variables $Y_i \sim Bern(q_i)$, where $q_i \in [0,1]$ \cite{harrell2015regression,Rufibach2010}. A Bernoulli random variable is like a potentially biased coin flip: each patient either has the event 1 or 0, based on their individual risk. Two random variables are independent if the probability of an outcome for one is unaffected by the outcome of the other \cite{wasserman2010statistics}. Correspondingly the best i-th prediction that can be obtained is the true underlying probability, i.e, $p_i=q_i$.

Brier scores offer a comprehensive evaluation of probabilistic predictions and are strictly proper \cite[Theorem 1]{Byrne_auc}, meaning that in expectation it is minimised if and only if the predictions are the true probabilities \cite{scoring_rule}. Note the distinction between accurate probabilistic predictions and clinical usefulness: while clinicians may prefer binary classifications, the quality of a probability prediction is judged by how closely its probabilities reflect the true risks. The Brier score is an evaluation measure that quantifies this closeness of predictions to true risks.
If desired, predicted probabilities can be assessed for clinical impact, e.g., via net benefit \cite{Steyerberg_Vickers_Cook_Gerds_Gonen_Obuchowski_Pencina_Kattan_2010}, or used to derive a classification \cite{hand_classification}.

Despite its widespread use \cite{Steyerberg2010_scaled_2,Mackillop1997,Redelmeier1991}, the Brier score is potentially often misunderstood in clinical research. Unlike more familiar statistical notions, it does not fit neatly into traditional statistical concepts potentially commonly taught in medical education \cite{illowsky2013introductory,armitage2001statistical,motulsky2010intuitive,rothman2008modern}.
Furthermore, the Brier score is mathematically equivalent to the mean squared error (MSE), a concept introduced by C.F. Gauss \cite{gauss1957gauss}. The MSE is widely applied in ordinary least squares regression \cite[$\S$ 11.3.1]{CaseBerg}, statistical learning \cite[$\S$ 2.2.1]{James2013}, or machine learning evaluation \cite{Flach2019}. The connection of Brier score and MSE can also lead to confusion. Brier score and MSE are used in different contexts, as the Brier score compares a probability to an outcome of the binary random variable in the sense of scoring rules \cite{scoring_rule}, while the MSE usually compares two real continuous values, in statistics typically comparing an estimator to the true value \cite{CaseBerg}. In particular, misconceptions about the Brier score are not uncommon and can sometimes be reinforced by potentially misleading statements in the literature \cite{Steyerberg2001,Steyerberg_Vickers_Cook_Gerds_Gonen_Obuchowski_Pencina_Kattan_2010,steyerberg2019clinical,cox_prediction_comp,van_Gelovene069249,Carriero2025}. Given the importance of accurate interpretation in clinical applications, it is crucial to address these misunderstandings.

Evaluation of binary prediction models has been widely studied in the medical literature. A review of traditional and modern performance measures is given in \cite{Steyerberg_Vickers_Cook_Gerds_Gonen_Obuchowski_Pencina_Kattan_2010}, which are also discussed in Harrells book \cite[$\S$ 10]{harrell2015regression} for regression models. Steyerberg's book \cite{steyerberg2019clinical} offers a comprehensive guide to clinical model development and validation. Alternative classification metrics such as the Gini coefficient and Pietra index were examined in \cite{Gini_Auroc}. Other works explore alternative evaluation scores \cite{alternative_scores} and compare metrics like the Brier score with net benefit analysis \cite{assel_brier_2017}. The Brier score has also been decomposed for deeper insights \cite{Yates1982}, and remains relevant in AI-based medical prediction \cite{vancalster2024performanceevaluationpredictiveai} or survival outcome evaluation \cite{cox_prediction_comp}.

We aim to clarify common misunderstandings about the Brier score, explain why they arise, and provide guidance on its appropriate interpretation in clinical research. As it is also directed at researchers in medicine, we first introduce some notions, and give a summary of the misconceptions. Then we go through the misconceptions and end with a conclusion. Supplementary analysis is kept in the Appendix for the interested reader.\newpage
\subsection*{Key terms we will use and what they mean}
\begin{itemize}
\item \textbf{Random variable:} A random variable is a quantity that depends on the outcome of a random process. 
For example, let \( Y \) denote whether a leaving patient is readmitted within 30 days:
\[
Y = 
\begin{cases}
1 & \text{if readmitted} \\
0 & \text{if not readmitted}
\end{cases}
\]
Before observation, \( Y \) is unknown and varies due to chance. If $Y \sim Bern(q)$ for $q=0.2$, the probability to observe a $1$ is $20\%$, and the probability to observe a $0$ is $80 \%$. 

\item \textbf{Expectation:} Expectation is a way to describe the average outcome we expect over the long run, if we repeat a situation many times under the same conditions, denoted by $\E(\cdot)$. The expectation of $Y \sim Bern(q)$ for $q=0.2$, $\E(Y)$, is $0.2$.
\item \textbf{Perfect prediction:} In clinical prediction models, a perfect prediction means that the predicted probabilities of outcomes exactly match the true underlying risk for each individual patient. In the case of our setting of the Brier score, if the true probabilities are given by $q=(q_1,\cdots ,q_n)$, the perfect prediction is given by $p=(p_1,\cdots ,p_n)=(q_1,\cdots ,q_n)$.

As an example, suppose a model estimates that a patient has a 20\% chance of being readmitted to the hospital within 30 days. If that patient's true risk is  exactly 20\%, then the model has made a perfect prediction for that individual.
\end{itemize}
\subsection*{Quick reference on common misconceptions}\hfill\break
\begin{table}[H]
\centering
\renewcommand{\arraystretch}{1.4} 
\begin{tabular}{p{6.2cm} p{8.3cm}}
\toprule
\textbf{Misconception} & \textbf{Reality} \\
\midrule
\# 1: Brier score of 0 = perfect model &
A Brier score of 0 implies extreme predictions (0\% or 100\%) that exactly match outcomes. This is odd in practice and typically indicates errors.\\

\# 2: Lower Brier score always means a better model &
A lower Brier score can be misleading across datasets with different underlying distributions for the outcomes. It is only meaningful to compare Brier scores within the same population and context. 
\\

\# 3: A low Brier score indicates good calibration &
Calibration and Brier score measure different aspects. Calibration refers to how well predicted probabilities reflect observed risks; a model can have a low Brier score and still be poorly calibrated. \\

\# 4: A Brier score near $\bar{y} - \bar{y}^2$ means the model is useless &
Even perfect predictions can yield a Brier score near $\bar{y} - \bar{y}^2$ if the true risks are close to the mean incidence. This does not necessarily imply non-informativeness. \\

\# 5: Brier score cannot exceed $\bar{y} - \bar{y}^2$ for reasonable predictions &
As a realisation of a random variable, the Brier score can exceed the threshold due to chance or reasonable predictions. \\
\bottomrule
\end{tabular}
\caption{Common misconceptions about the Brier score and how they contrast with statistical reality.}
\end{table}

\subsection*{Related literature}
Some of our points have been previously observed. We review related references that observe similar findings. However, given the widespread use of the Brier score, our literature review is necessarily partial. \cite{jewson2004problembrierscore} outlines examples where the model comparison of expected BSs is potentially contrary to how a human would judge. \cite{Rufibach2010} outlined the distinction between calibration and prediction error, emphasizing the misunderstanding that low Brier score indicates good calibration, while \cite{ML_eval_21} notes that Brier score comparisons across datasets should be avoided as it depends on the incidence rate. 
\subsection*{Acknowledgements}
We thank Lucia de Andres Bragado, Julien Vionnet, Simon Schwab, and Matthew Parry for helpful discussions and feedback.

\section{Main properties of the Brier score}
The Brier score in \eqref{Brier} is a measure to quantify the accuracy of probabilistic predictions, taking values between 0 and 1 with lower values indicating better performance. As the $y_i$ are realisations of random variables, the Brier score is a random variable. Hence any evaluation of the Brier score has a random component. It is strictly proper, meaning in expectations the perfect prediction uniquely minimises the Brier score of \eqref{Brier}. Even more holds: in expectation the Brier score preserves the  Euclidean distance order $l_2$ between predictions $p\in[0,1]^n$ to the true probability $q\in[0,1]^n$, where $l_2(p,q)=\sqrt{\sum_{i=1}^n(p_i-q_i)^2}$ \cite{nau_effective}.

Suppose we have two sets of predicted probabilities, $p$ and $p'$, from two different models, and let $q$ represent the true probabilities. If $p$ is closer to $q$ than $p'$ is in terms of Euclidean distance, i.e.,
$$
l_2(p, q) < l_2(p', q),
$$
then the expected Brier score for $p$ will be lower than that of $p'$.

To build intuition, consider a 2-dimensional example where the true probabilities are $q = (1/2, 1/2)$. In this case, any prediction falling inside a circle centered at \( (1/2, 1/2) \) will have a lower expected Brier score than one on or outside the circle. This provides a geometric view of probabilistic accuracy: the closer your predictions are to the truth (in Euclidean distance), the better the model performs.
\begin{center}
\begin{tikzpicture}[scale=3]
  \draw[very thick, gray] (0,0) rectangle (1,1);

  \draw[->, very thick] (0,0) -- (1.1,0) node[right] {\LARGE$x$};
  \draw[->, very thick] (0,0) -- (0,1.1) node[above] {\LARGE$y$};

  \draw[very thick] (1,0.02) -- (1,-0.02) node[below] {\LARGE$1$};
  \draw[very thick] (0.02,1) -- (-0.02,1) node[left] {\LARGE$1$};

  \draw[blue, ultra thick] (0.5,0.5) circle [radius=0.2];

  \draw[red, thick, ->] (0.5,0.5) -- (0.7,0.5) node[midway, above] {$0.2$};

  \filldraw[red] (0.5,0.5) circle (0.008) node[below] {\scriptsize$(\frac{1}{2},\frac{1}{2})$};
\end{tikzpicture}
\end{center}

We can further quantify the expected behavior when slightly perturbing the predicted probabilities away from the true values by $\varepsilon$. This results in an expected Brier score increase of $\varepsilon^2$. To be more specific, if, say, the predicted probabilities are $0.1$ more off, the expected Brier score will be $0.1^2=0.01$ bigger. In contrast, changes in the true probability can lead to more significant shifts in expectation.

While individual true probabilities are unobservable in practice, the observed prevalence provides an estimate of the expected prevalence and can serve as a non-informative reference point for comparison. For idealised scenarios, the law of large numbers or central limit theorem can be used to argue that the observed score is a reliable estimate of its expected value. For large datasets, this can be used to justify treating the empirical Brier score as a stable summary measure of model performance.  To summarise and simplify our previous points, we note that an observed Brier score is a function of
\begin{enumerate}
\item the underlying true probabilities (the $q_i$s),
\item the closeness of the predictions when compared to the true probabilities (how close $p_i$ is from $q_i$),
\item some randomness that comes from the Bernoulli random variables (the observed $y_i$ that are realisations of $Y_i\sim Bern(q_i)$).
\end{enumerate}
The influence of the randomness can be expected to decreases with $n$, and the expectation of the Brier score can be seen as the long term average, which represent typical values if $n$ is big.
More details on properties as well as calculations for the derivations above can be found in Appendix $\S$ \ref{about_BS}, and connections to other measures in Appendix $\S$ \ref{connections}.

\subsection{Means of understanding the Brier score}  

We will use the following approaches to better understand the Brier score \eqref{Brier}:  
\begin{itemize}  
      \item \textbf{Expectation of the Brier score:}  
    We will analyze the expected value the Brier score \eqref{Brier} takes, which will help to illustrate typical observed values of the Brier score.  

    \item \textbf{Simulation-Based Evaluations:}  
    We assess observations of the Brier score by simulating Bernoulli outcomes under different sampled probability distributions for \( q_i \) and different sample sizes. More detail is in Appendix $\S$ \ref{ADEMP}.
\begin{itemize}
    \item \textbf{Sample size $n$:}. Settings considered: $n\in\{300,1000\}$.
    \item \textbf{True distribution}: We consider $q_i$ as realisations of random variables. The $q_i\in[0,1]$ are then used to simulate $Y_i\sim Bern(q_i)$. 
    \item \textbf{Predictor distribution}: The $p_i\in[0,1]$ used in \eqref{Brier} are considered as functions of $q_i$ with potentially random error.
\item \textbf{Estimand}: Includes median Brier score, $5\%,95\%$ quantiles, and the distribution via violin plots.
\end{itemize}
\end{itemize}
\section{Misconceptions}\label{misconceptions} Below are the most common misinterpretations of the Brier score when evaluating probability predictions for binary events, accompanied by an explanation and examples illustrating why it is incorrect. 

\subsection{Misconception \#1: A Brier score of 0 means a perfect model, and a perfect model has Brier score 0.}\label{subs_misc1}
\begin{itemize}
\item \textbf{A Brier score of 0 means a perfect model.} A Brier score of 0 implies perfect alignment between predicted probabilities and observed outcomes, with predictions exclusively 0 or 1. The true probabilities are within $[0,1]$, usually expected in $(0,1)$.  Hence, rather than signalling a perfect model, an observed Brier score of 0 potentially indicates errors. 
\item\textbf{A perfect model has Brier score 0.} With at least one of the true probabilities $q_i$ in $(0,1)$, observing a Brier score of 0 with perfect predictions is impossible (see Appendix \ref{math_proof}). Hence in normal situations perfect models will have a Brier score bigger 0.
\end{itemize}

To illustrate this and for later reference, we present simulations with perfect predictions, showing that ideal models do not yield a Brier score of 0. In fact, expected Brier scores for a perfect model can be notably high. 
\begin{figure}[ht]
    \centering
    \includegraphics[width=0.7\textwidth]{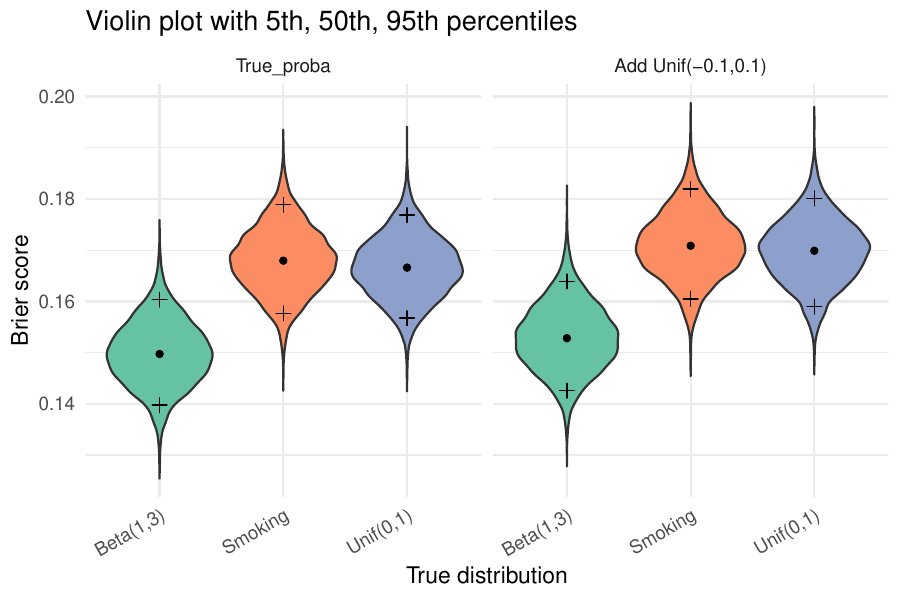}
    \caption{Violin plot for simulations with $n=1000$. The left side has perfect predictions, the right $Unif(-0.1,0.1)$ noise added.}
    \label{tab:distribution_comparison}
\end{figure}


\subsection{Misconception  \#2: When comparing two prediction models, the model with lower Brier score is better.}  
Brier scores enable model comparison across datasets, but comparisons can be misleading. We examine three scenarios of increasing risk to illustrate this point.

\subsubsection{Comparing Models on the same dataset via Brier score:}  
When evaluating two models on the same dataset, lower Brier score indicate better fit. There are two potential issues that can prevent the better model to have lower Brier score:
    \begin{enumerate}
    \item It is possible that the worse model has a better score by chance. However, with higher sample sizes this becomes more unlikely.
    \item In expectation the Brier score ranks according to the $l_2$-distance between prediction $p\in[0,1]^n$ to the true probability $q\in[0,1]^n$. Changing perfect predictions from $q_i$ to $q_i+\varepsilon$ or $q_i-\varepsilon$ gives the same result in expectation. However, for humans the direction can matter particularly for $q_i$ close to zero or one \cite{jewson2004problembrierscore}. Consider the example with one observation from \cite[$\S$ 3]{jewson2004problembrierscore} where the probability of an event is \( P(Y_1=1) = \frac{1}{10} \) (i.e., 10\% chance), and compare two models:  
    \begin{itemize}  
        \item Model 1: \( p_1 = 0 \) (predicting the event will never occur)  
        \item Model 2: \( \tilde{p_1} = \frac{1}{4} \) (predicting the event occurs with a probability of 25\%)  
    \end{itemize}  
Model 1 has an expected Brier score of $0.1$, as compared to $0.1125$ for model 2. However, model 1 predicts a zero probability for the observation that has actually a 10\% probability, hence 25\% might seem better from a humans perspective. 
    \end{enumerate}

\subsubsection{Comparing Brier scores across different datasets with similar incidences (also known as class imbalance) is meaningful.}
Some recommend to compare Brier score of prediction models on datasets with similar class imbalances \cite{Carriero2025}.
 However, the true probability distribution is unobservable, and true values $q_i$ strongly influences the expectation of the Brier score. Thus, even with identical class imbalances, we cannot assume that the true probability distributions are comparable and the same caution as in the previous point should be exercised in interpreting such comparison. As an example, compare three cases: True distribution always $1/2$, independent $Unif(0,1)$ from Figure \ref{tab:distribution_comparison}, or a $0$ or $1$ distribution with each 50 \%. The perfect predictions have in each case expected Brier score $0.25, 0.16$ and $0$.  Hence perfect predictions for true distribution always $1/2$ and expected Brier score $0.25$ is better than, say predictions with noise $Unif(-0.1,0.1)$ for true probabilities $Unif(0,1)$ from Figure \ref{tab:distribution_comparison} with expected Brier score $0.17$. Potentially, similarities in outcome and population trough population characteristics might indicate how true probabilities differ or align in a clinical prediction model setting.
  
 \subsubsection{Comparing Brier scores across datasets with different incidences:}  
If true outcome distributions differ, even perfect models yield different Brier score distributions and (potentially) expectations. Thus, cross-dataset comparisons may lack meaningful insight. Observing different incidences may indicate that the true outcome distributions differ, making the Brier score comparison unreliable. Compare, e.g. the $Beta(1,3)$ prediction with prediction error $+Unif(-0.1,0.1)$ to the perfect smoke prediction model from figure \ref{tab:distribution_comparison}. The $Beta(1,3)$ prediction with prediction error has lower Brier score mostly when compared to perfect smoking prediction, nonetheless the perfect model is obviously better.

\subsection{Misconception \#3: A low Brier score indicates good calibration.}\label{detailed}
A low Brier score does not necessarily indicate good calibration of a model. We can have perfect predictions, but low or high Brier score due to the underlying probabilities, or similary not so accurate or biased predictions but a Brier score that is low or high due to the underlying probabilities. In expectation a change of perfect prediction from $p_i=q_i$ to $q_i+\varepsilon$ or $q_i-\varepsilon$ gives the same result (cf., e.g., \eqref{Expectation_diff2}), and miscalibration where errors tend to go mostly in one direction are equally punished, but e.g. calibration is differently affected as illustrated in Figure \ref{tab:distribution_comparison_CIL}.

Assessing calibration should be done using additional metrics like calibration in the large (CIL), calibration curves or similar evaluation components \cite{ICI_co,van_calster_calibration_2019}. Hence Brier score should not be the sole criterion for evaluating model performance.

\begin{figure}[ht]
    \centering
    \includegraphics[width=0.7\textwidth]{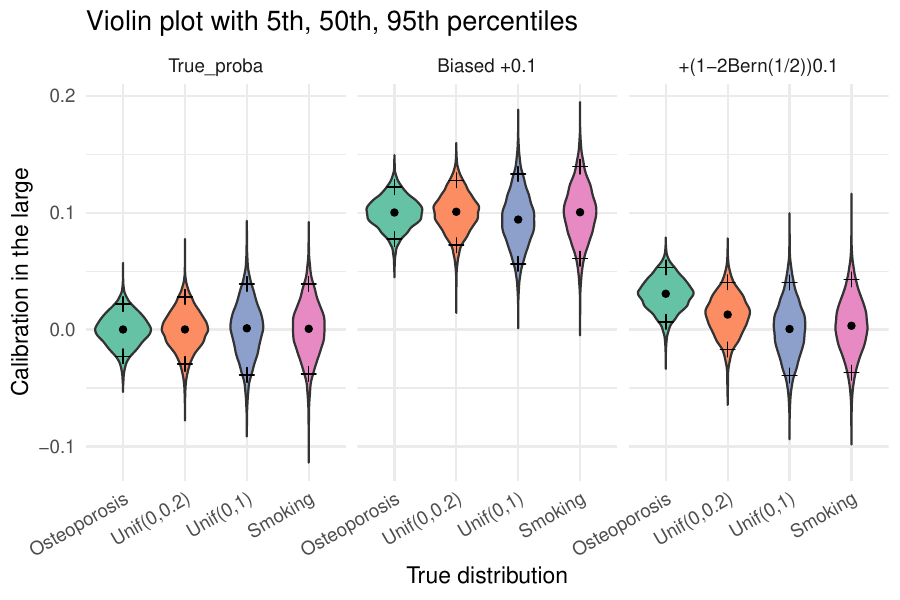}
    \caption{Violin plot for simulations with $n=300$.}
    \label{tab:distribution_comparison_CIL}
\end{figure}

\subsection{Misconception \#4: Having a Brier score of around $\bar{y}-\bar{y}^2$ where $\bar{y}$ is the mean observed incidence means we have a useless or non-informative model. As an example, if $\bar{y}=0.5$, then $\bar{y}-\bar{y}^2=0.25$, or , if $\bar{y}=0.1$, then $\bar{y}-\bar{y}^2=0.090$}\label{50_prev}
Note that the lowest expected Brier score occurs when predicted probabilities match the true probabilities, representing a perfect prediction that cannot be improved. However, we cannot observe the true probabilities. With observed incidences of $0.5$, perfect predictions can yield a Brier score of close to 0.25. Hence if we observe a Brier score of around $\bar{y}-\bar{y}^2$, the following alternatives to a bad model could explain such a Brier score:
\begin{itemize}
\item Many of the true probabilities are around $\bar{y}$, making expected Brier scores of perfect predictions close to $\bar{y}-\bar{y}^2$.
\item For $n$ low, randomness can make the Brier score higher than its expectation.
\end{itemize}

As an illustration of example values, consider $\bar{y}-\bar{y}^2-BS_{perf}$ in figure \ref{tab:distribution_comparison2} on the left, where $BS_{perf}$ is the Brier score under perfect predictions. The observed median is very low, with $5\%$ percentile below or around zero.\color{black}

\begin{figure}[ht]
    \centering
    \includegraphics[width=0.7\textwidth]{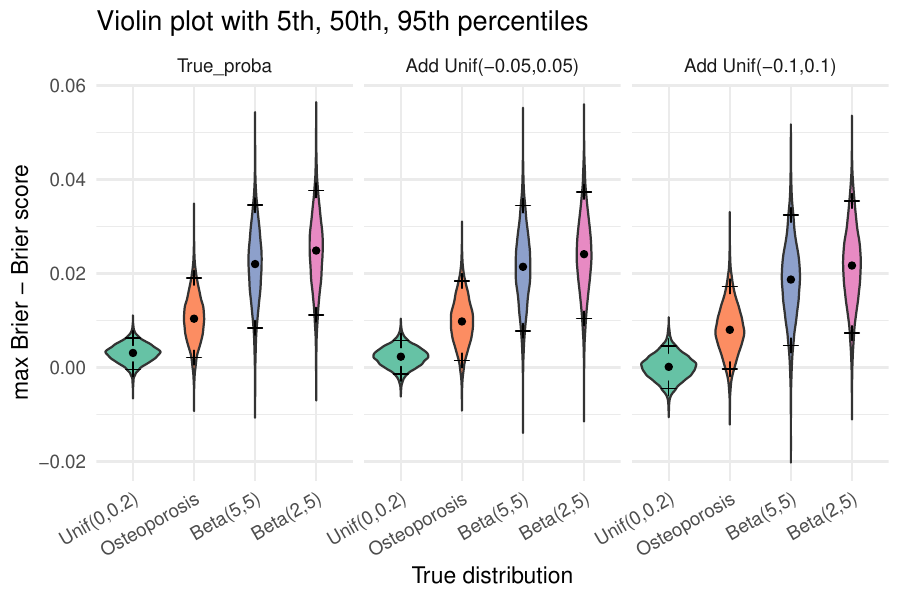}
    \caption{Violin plot for simulations with $n=300$.}
    \label{tab:distribution_comparison2}
\end{figure}

\subsection{Misconception \#5: For an observed incidence of $\bar{y}$, the Brier score \eqref{Brier} for reasonable predictions can not be bigger than $\bar{y}-\bar{y}^2$ }\label{prev_limit}

Although the observed Brier score might often be bounded above by $\bar{y} - \bar{y}^2$ in practice, higher values can occur. Since true probabilities $q_i$ are unobservable, we cannot rule out that even perfect predictions yield a score near the unobservable $\bar{q}-\bar{q}^2$, where $\bar{q}$ is the true mean incidence $\bar{q} = \frac{1}{n}\sum_{i=1}^n q_i$. The Brier score's randomness stems from the outcomes, also making $\bar{y}$ random. 
Hence a Brier score exceeding $\bar{y} - \bar{y}^2$ may also result from the same options as given in misconception \#4.
To illustrate this, we show for perfect predictions in some settings with $n=300$, the probability of having a Brier score that is bigger than $\bar{y}-\bar{y}^2$ is nonzero. As in practice we will not have perfect predictions, for reasonable predictions the probability to have a Brier score bigger than $\bar{y}-\bar{y}^2$ is higher.,The right part can be understood as lower bounds for corresponding probabilities in the corresponding settings. 
\begin{figure}[ht]
    \centering
    \includegraphics[width=0.5\textwidth]{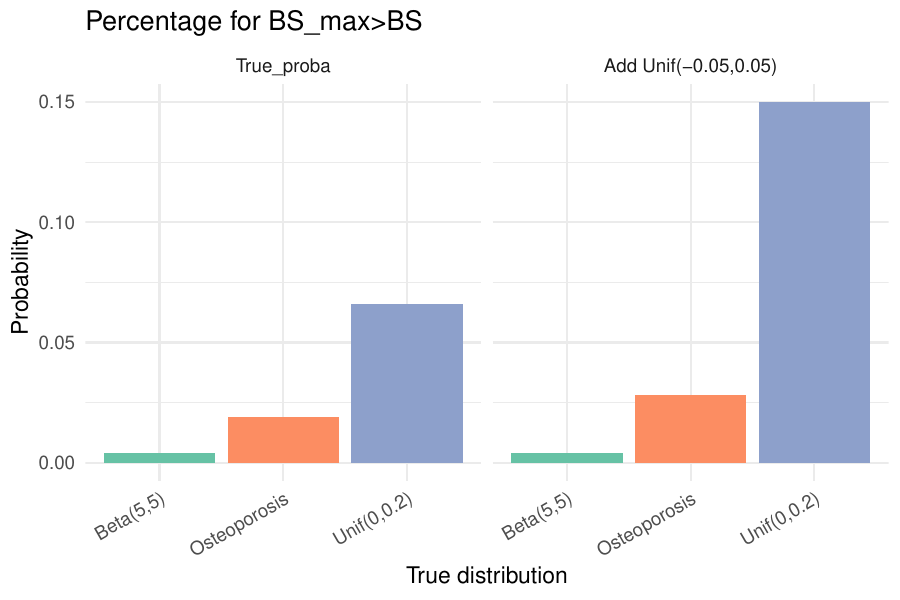}
    \caption{Histogram for simulations with $n=300$.}
    \label{tab:distribution_comparison4}
\end{figure}

\section{Conclusions and final remarks}\label{conclusions}
We adressed common misconceptions regarding the interpretation and use of the Brier score. The Brier score is a realisation of a random variable. The true underlying probabilities influence the expectation of Brier score strongly, potentially stronger than the closeness of the predictions to the true probabilities. Brier scores of zero are rather an indication of errors in realistic settings, and a low Brier score does not necessarily indicate a perfect model. Comparisons of Brier scores across models on the same data can be done, and across different data should be avoided or interpreted carefully. Low Brier scores do not guarantee good calibration as  evaluating calibration should be done with other metrics. In analogy to idealised settings, randomness in observed Brier scores can be expected to decrease with bigger sample sizes. A recent literature review on clinical prediction models found that sample sizes used had median sample size of $1250$ with $(Q1,Q3)=(353,188860)$\cite{Christodoulou2019}. Hence at least a quarter of the prediction models had relatively low sample sizes, with randomness similar to the settings in our simulations with $n=300$ or $n=1000$. 

The Brier score remains a valuable metric for assessing probabilistic predictions, but its interpretation requires an understanding of how the underlying probabilities and closeness of predictions influence the observed value. Once misconceptions are avoided, the Brier score serves as a reliable relative measure of overall performance. In particular, it is effective and strictly proper and hence in expectation, it reflects Euclidean distance between predictions, and is minimised uniquely at the true probabilities making relative comparisons on the same data meaningful.

 \bibliographystyle{plain}

 \bibliography{pred_references}

\begin{thebibliography}{10}

\bibitem{cox_prediction_comp}
Assessing performance and clinical usefulness in prediction models with
  survival outcomes: Practical guidance for cox proportional hazards models.
\newblock {\em Annals of Internal Medicine}, 176(1):105--114, 2023.
\newblock PMID: 36571841.

\bibitem{nhanesA}
Laha Ale, Robert Gentleman, Teresa~Filshtein Sonmez, Deepayan Sarkar, and
  Christopher Endres.
\newblock nhanesa: achieving transparency and reproducibility in nhanes
  research.
\newblock {\em Database}, Apr 15, 2024.

\bibitem{Amann_Escher1}
Herbert Amann and Joachim Escher.
\newblock {\em Analysis {I}}.
\newblock Birkhauser Basel, January 2005.

\bibitem{armitage2001statistical}
P.~Armitage, G.~Berry, and J.N.S. Matthews.
\newblock {\em Statistical Methods in Medical Research}.
\newblock Oxford statistical science series. Wiley, 2001.

\bibitem{assel_brier_2017}
Melissa Assel, Daniel~D. Sjoberg, and Andrew~J. Vickers.
\newblock The {Brier} score does not evaluate the clinical utility of
  diagnostic tests or prediction models.
\newblock {\em Diagnostic and Prognostic Research}, 1(1):19, December 2017.

\bibitem{ICI_co}
Peter~C. Austin and Ewout~W. Steyerberg.
\newblock The integrated calibration index (ici) and related metrics for
  quantifying the calibration of logistic regression models.
\newblock {\em Statistics in Medicine}, 38(21):4051--4065, 2019.

\bibitem{Brier_original}
G.~W. Brier.
\newblock Verification of forecasts expressed in terms of probability.
\newblock {\em Monthly Weather Review}, 78(1):1 -- 3, 1950.

\bibitem{Byrne_auc}
Simon Byrne.
\newblock {A note on the use of empirical AUC for evaluating probabilistic
  forecasts}.
\newblock {\em Electronic Journal of Statistics}, 10(1):380 -- 393, 2016.

\bibitem{vancalster2024performanceevaluationpredictiveai}
Ben~Van Calster, Gary~S. Collins, Andrew~J. Vickers, Laure Wynants, Kathleen~F.
  Kerr, Lasai Barrenada, Gael Varoquaux, Karandeep Singh, Karel G.~M. Moons,
  Tina Hernandez-boussard, Dirk Timmerman, David~J. Mclernon, Maarten~Van
  Smeden, and Ewout~W. Steyerberg.
\newblock Performance evaluation of predictive ai models to support medical
  decisions: Overview and guidance, 2024.

\bibitem{Carriero2025}
Alex Carriero, Kim Luijken, Anne de~Hond, Karel G.~M. Moons, Ben van Calster,
  and Maarten van Smeden.
\newblock The harms of class imbalance corrections for machine learning based
  prediction models: A simulation study.
\newblock {\em Statistics in Medicine}, 44(3?4), January 2025.

\bibitem{CaseBerg}
George Casella and Roger Berger.
\newblock {\em Statistical Inference}.
\newblock {Duxbury Resource Center}, June 2001.

\bibitem{hoessly_study_prot}
Frederique Chammartin, Linard Hoessly, Michael Koller, Peter~Werner Schreiber,
  Dionysios Neofytos, Jaromil Frossard, Alexander Leichtle, and Simon Schwab.
\newblock Coverage - comparing variable and feature selection strategies for
  prediction - protocol of a simulation study in low-dimensional
  transplantation data, 2025.

\bibitem{Christodoulou2019}
Evangelia Christodoulou, Jie Ma, Gary~S. Collins, Ewout~W. Steyerberg, Jan~Y.
  Verbakel, and Ben Van~Calster.
\newblock A systematic review shows no performance benefit of machine learning
  over logistic regression for clinical prediction models.
\newblock {\em Journal of Clinical Epidemiology}, 110:12?22, June 2019.

\bibitem{Flach2019}
Peter Flach.
\newblock Performance evaluation in machine learning: The good, the bad, the
  ugly, and the way forward.
\newblock {\em Proceedings of the AAAI Conference on Artificial Intelligence},
  33(01):9808?9814, July 2019.

\bibitem{NHANES}
Centers for Disease~Control and Prevention (CDC).
\newblock Nhanes 2007-2008, 2009.

\bibitem{spher_effective}
Daniel Friedman.
\newblock Effective scoring rules for probabilistic forecasts.
\newblock {\em Management Science}, 29(4):447--454, 1983.

\bibitem{gauss1957gauss}
C.F. Gauss, J.~Bertrand, and H.F. Trotter.
\newblock {\em Gauss's Work (1803-1826) on the Theory of Least Squares}.
\newblock Statistical Techniques Research Group, Section of Mathematical
  Statistics, Department of Mathematical [sic], Princeton University, 1957.

\bibitem{georgii2008stochastics}
H.O. Georgii.
\newblock {\em Stochastics: Introduction to Probability and Statistics}.
\newblock De Gruyter textbook. Walter De Gruyter, 2008.

\bibitem{scoring_rule}
Tilmann Gneiting and Adrian~E Raftery.
\newblock Strictly proper scoring rules, prediction, and estimation.
\newblock {\em Journal of the American Statistical Association},
  102(477):359--378, 2007.

\bibitem{hand_classification}
David~J. Hand.
\newblock Assessing the performance of classification methods.
\newblock {\em International Statistical Review}, 80(3):400--414, 2012.

\bibitem{harrell2015regression}
F.E. Harrell.
\newblock {\em Regression Modeling Strategies: With Applications to Linear
  Models, Logistic and Ordinal Regression, and Survival Analysis}.
\newblock Springer Series in Statistics. Springer International Publishing,
  2015.

\bibitem{rms_package}
Frank~E {Harrell Jr}.
\newblock {\em rms: Regression Modeling Strategies}, 2025.
\newblock R package version 7.0-0.

\bibitem{alternative_scores}
Chenxi Huang, Shu-Xia Li, César Caraballo, Frederick~A. Masoudi, John~S.
  Rumsfeld, John~A. Spertus, Sharon-Lise~T. Normand, Bobak~J. Mortazavi, and
  Harlan~M. Krumholz.
\newblock Performance metrics for the comparative analysis of clinical risk
  prediction models employing machine learning.
\newblock {\em Circulation: Cardiovascular Quality and Outcomes},
  14(10):e007526, 2021.

\bibitem{ML_eval_21}
Chenxi Huang, Shu-Xia Li, César Caraballo, Frederick~A. Masoudi, John~S.
  Rumsfeld, John~A. Spertus, Sharon-Lise~T. Normand, Bobak~J. Mortazavi, and
  Harlan~M. Krumholz.
\newblock Performance metrics for the comparative analysis of clinical risk
  prediction models employing machine learning.
\newblock {\em Circulation: Cardiovascular Quality and Outcomes},
  14(10):e007526, 2021.

\bibitem{illowsky2013introductory}
B.~Illowsky and S.~Dean.
\newblock {\em Introductory Statistics 2e}.
\newblock OpenStax, 2013.

\bibitem{James2013}
Gareth James, Daniela Witten, Trevor Hastie, and Robert Tibshirani.
\newblock {\em An Introduction to Statistical Learning: with Applications in
  R}.
\newblock Springer, 2013.

\bibitem{jewson2004problembrierscore}
Stephen Jewson.
\newblock The problem with the brier score, 2004.

\bibitem{Mackillop1997}
William~J. Mackillop and Carol~F. Quirt.
\newblock Measuring the accuracy of prognostic judgments in oncology.
\newblock {\em Journal of Clinical Epidemiology}, 50(1):21?29, January 1997.

\bibitem{ADEMP}
Tim~P. Morris, Ian~R. White, and Michael~J. Crowther.
\newblock Using simulation studies to evaluate statistical methods.
\newblock {\em Statistics in Medicine}, 38(11):2074--2102, 2019.

\bibitem{motulsky2010intuitive}
H.~Motulsky.
\newblock {\em Intuitive Biostatistics: A Nonmathematical Guide to Statistical
  Thinking}.
\newblock Oxford University Press, 2010.

\bibitem{nau_effective}
Robert~F. Nau.
\newblock Should scoring rules be “effective”?
\newblock {\em Management Science}, 31(5):527--535, 1985.

\bibitem{Redelmeier1991}
Donald~A. Redelmeier, Daniel~A. Bloch, and David~H. Hickam.
\newblock Assessing predictive accuracy: How to compare brier scores.
\newblock {\em Journal of Clinical Epidemiology}, 44(11):1141?1146, January
  1991.

\bibitem{rothman2008modern}
K.J. Rothman, S.~Greenland, and T.L. Lash.
\newblock {\em Modern Epidemiology}.
\newblock Wolters Kluwer Health/Lippincott Williams \& Wilkins, 2008.

\bibitem{Rufibach2010}
Kaspar Rufibach.
\newblock Use of brier score to assess binary predictions.
\newblock {\em Journal of Clinical Epidemiology}, 63(8):938?939, August 2010.

\bibitem{steyerberg2019clinical}
E.W. Steyerberg.
\newblock {\em Clinical Prediction Models: A Practical Approach to Development,
  Validation, and Updating}.
\newblock Statistics for Biology and Health. Springer International Publishing,
  2019.

\bibitem{Steyerberg2001}
Ewout~W Steyerberg, Frank~E Harrell, Gerard~J.J.M Borsboom, M.J.C Eijkemans,
  Yvonne Vergouwe, and J.Dik~F Habbema.
\newblock Internal validation of predictive models.
\newblock {\em Journal of Clinical Epidemiology}, 54(8):774?781, August 2001.

\bibitem{Steyerberg_Vickers_Cook_Gerds_Gonen_Obuchowski_Pencina_Kattan_2010}
Ewout~W. Steyerberg, Andrew~J. Vickers, Nancy~R. Cook, Thomas Gerds, Mithat
  Gonen, Nancy Obuchowski, Michael~J. Pencina, and Michael~W. Kattan.
\newblock Assessing the performance of prediction models.
\newblock {\em Epidemiology}, 21(1):128?138, Jan 2010.

\bibitem{Steyerberg2010_scaled_2}
Ewout~W Steyerberg, Andrew~J Vickers, Nancy~R Cook, Thomas Gerds, Mithat Gonen,
  Nancy Obuchowski, Michael~J Pencina, and Michael~W Kattan.
\newblock Assessing the performance of prediction models: a framework for
  traditional and novel measures.
\newblock {\em Epidemiology}, 21(1):128--138, January 2010.

\bibitem{van_calster_calibration_2019}
Ben Van~Calster, David~J. McLernon, Maarten van Smeden, Laure Wynants, Ewout~W.
  Steyerberg, Patrick Bossuyt, Gary~S. Collins, Petra Macaskill, David~J.
  McLernon, Karel G.~M. Moons, Ewout~W. Steyerberg, Ben Van Calster, Maarten
  van Smeden, Andrew J. Vickers, and {On behalf of Topic Group ‘Evaluating
  diagnostic tests and prediction models’ of the STRATOS initiative}.
\newblock Calibration: the {Achilles} heel of predictive analytics.
\newblock {\em BMC Medicine}, 17(1):230, December 2019.

\bibitem{van_der_ploeg_modern_2014}
Tjeerd van~der Ploeg, Peter~C. Austin, and Ewout~W. Steyerberg.
\newblock Modern modelling techniques are data hungry: a simulation study for
  predicting dichotomous endpoints.
\newblock {\em BMC Medical Research Methodology}, 14(1):137, December 2014.

\bibitem{van_Gelovene069249}
Nan van Geloven, Daniele Giardiello, Edouard~F Bonneville, Lucy Teece, Chava~L
  Ramspek, Maarten van Smeden, Kym I~E Snell, Ben van Calster, Maja
  Pohar-Perme, Richard~D Riley, Hein Putter, and Ewout Steyerberg.
\newblock Validation of prediction models in the presence of competing risks: a
  guide through modern methods.
\newblock {\em BMJ}, 377, 2022.

\bibitem{wasserman2010statistics}
Larry Wasserman.
\newblock {\em All of statistics : a concise course in statistical inference}.
\newblock Springer, New York, 2010.

\bibitem{Gini_Auroc}
Yun-Chun Wu and Wen-Chung Lee.
\newblock Alternative performance measures for prediction models.
\newblock {\em PLOS ONE}, 9(3):1--6, 03 2014.

\bibitem{Yates1982}
J.Frank Yates.
\newblock External correspondence: Decompositions of the mean probability
  score.
\newblock {\em Organizational Behavior and Human Performance}, 30(1):132?156,
  August 1982.

\end{thebibliography}

\appendix
\section{More on the Brier score}\label{about_BS}
In this section, we review key properties of the Brier score, delve into some mathematical points, and give more details on the simulations we use to visualise the behavior of the Brier score.

When evaluating the Brier score \eqref{Brier}, one can compare \eqref{Brier} to the trivial Brier score when entering $p_{1/2}=(1/2,\cdots,1/2)$, which, independent of the specific value of $y$ gives $BS(p_{1/2},y)=1/4$. A slightly better prediction is the mean incidence $\bar{y}_v=(\bar{y},\cdots,\bar{y})$, which when used in the Brier score gives
$$
BS(\bar{y}_v,y)= \bar{y}-\bar{y}^2.
$$
Clearly, $BS(p_{1/2},y)=1/4\geq BS(\bar{y}_v,y)= \bar{y}-\bar{y}^2$. For  high or  low incidences, $BS(\bar{y}_v,y)$ is low, and, e.g., for $\bar{y}=0.1$ or  $\bar{y}=0.9$, $BS(\bar{y}_v,y)=0.09$. Furthermore, $\bar{y}-\bar{y}^2$ is symmetric around $0.5$, see Figure \ref{fig:best_exp}.

\subsection{Key properties about the Brier score} \label{prop_BS} 
We summarise essential properties of the Brier score.  
\begin{enumerate}  [label=\textbf{(B\Roman*)}]
    \item\label{BS_range} \textbf{Range and interpretation:} The Brier score takes values in the interval $[0,1]$, with lower values typically indicating more accurate probabilistic predictions.
    \item\label{BS_random} \textbf{The Brier score is a random variable:} The Brier score is a function of random variables and as such it is also a random variable. In clinical prediction models, the outcomes \( y_i \) are realisations of \( Y_i \sim \text{Bern}(q_i) \), where \( q_i \in [0,1] \).

    \item\label{BS_optimal} \textbf{Optimal predictions and true probabilities:} The unique optimal prediction minimising the expected Brier score is the true outcome probability, i.e., for all $i$ $p_i=q_i$, as the Brier score is strictly proper \cite[Theorem 1]{Byrne_auc}.  
        \item\label{BS_1d} \textbf{Expectation of Brier score:} 
        \begin{itemize}
        \item $(n=1)$: As an illustrative example consider the case $n=1$. Using the true probability, we calculate the expectation. Let $Y_1\sim Bern(q_1),q_1\in[0,1]$ with prediction $p_1\in[0,1]$. Then the expectation of Brier score is 
     \begin{equation}\label{Expectation_Brier}
   g(p_1,q_1) := \E_{Y_1\sim Bern(q_1)}[BS(p_1, Y_1)]=p_1^2 - 2p_1q_1 + q_1.
\end{equation}
For the interested reader, the calculation is given in Appendix \ref{exp_BS}. The optimal prediction minimising \eqref{Expectation_Brier} is $p_1=q_1$. We next go through two key cases
\begin{itemize}
\item For perfect prediction, the expectation value is given by
     \begin{equation}\label{Expectation_Brier_perf}
f(q_1)=q_1-q_1^2.
\end{equation}
As examples, consider the following cases:
\begin{itemize}
\item if $q_1=1/2=p_1$, the expected Brier score is $f(1/2)=1/4$ corresponding to the maximum of \eqref{Expectation_Brier_perf},
\item if for $q_1=1/10=p_1$, $f(1/10)=9/100$.
\end{itemize}
The expectation of the Brier score as a function of $q_1$, when $p_1=q_1$, i.e., \eqref{Expectation_Brier_perf}, looks as follows:\\
\begin{figure}[h!]
\begin{center}
\begin{tikzpicture}[scale=4]
  \draw[->] (0, 0) -- (1, 0) node[right] {$p$};
  \draw[->] (0, 0) -- (0, 0.3) node[above] {$f(p)$};
  \draw (-0.01, 0.25) to (0.01, 0.25);
    \draw (0.5,-0.01) to (0.5,0.01);
        \draw (1,-0.01) to (1,0.01);
  \draw[scale=1, domain=0:1, smooth, variable=\p, blue] plot ({\p}, {\p-\p*\p});
    \node[below] at (1,0) {1};
        \node[below] at (0.5,0) {0.5};
  \node[left] at (0,0.25) {0.25};
\end{tikzpicture}
\end{center}
  \caption{Expectation of the Brier score for the optimal prediction as a function of the underlying true probability with $p_1=q_1$.}
  \label{fig:best_exp}
\end{figure}
\item Next we compare the expected Brier score under perfect prediction with $p_1=q_1$ to 
\begin{itemize} 
\item the expectation of the Brier score under perfect prediction but increased true probability, i.e.,  $\tilde{p}_1:=p_1+\varepsilon=\tilde{q}_1=q_1+\varepsilon$ for $q_1+\varepsilon\leq 1/2$, and taking the difference
     \begin{equation}\label{Expectation_diff1}
g(q_1+\varepsilon,q_1+\varepsilon)-g(q_1,q_1)=\varepsilon(1-2q_1-\varepsilon) 
\end{equation}
Note that the difference depends strongly on $q_1$, and that the same holds if we shift true probability and perfect prediction from $q_1>1/2$ to some value $q_1-\varepsilon$ for $q_1-\varepsilon\geq 1/2$.
\item to the expectation of the Brier score with same true probability but slightly wrong prediction $\tilde{p}_1:=p_1+\varepsilon, \tilde{q}_1 =q_1\in[0,1]$ and taking the difference
     \begin{equation}\label{Expectation_diff2}
g(q_1+\varepsilon,q_1)-g(q_1,q_1)=\varepsilon^2
\end{equation}
Note that the difference does not depend on $q_1$, and the same holds for $\tilde{p}_1:=p_1-\varepsilon$. The calculations are in Appendix $\S$ \ref{exp_BS_diff}.
\end{itemize}
\item From the previous point, we conclude that if we slightly predict wrong, the costs are almost inexistent, e.g. if the true probability is $q_1$ but we predicted $p_1=q_1+0.1$, the expected difference between the Brier score of the perfect prediction and the slightly wrong one is $0.01$ by \eqref{Expectation_diff2}. However, if the true probability is changed towards $0.5$, e.g. from $q_1=0.1$ to $\tilde{q}_1=0.2$ and we have perfect prediction, the expected difference of the Brier scores is $0.07$ by \eqref{Expectation_diff1} which is seven times more.
\end{itemize}

 \item  $n$ arbitrary: As the Brier score is a rescaled sum of the Brier score with $n=1$, the above conclusion extend in a straightforward way. We note the following nice order preservation property from \eqref{Expectation_diff2}: If $p_1,\tilde{p_1}\in[0,1]$ such that $|p_1-q_1|< |\tilde{p_1}-q_1|$, then 
$$ \E_{Y_1\sim Bern(q_1)}[BS(p_1, Y_1)]< \E_{Y_1\sim Bern(q_1)}[BS(\tilde{p_1}, Y_1)].
$$
This property is known as being effective \cite{spher_effective}, and for $n>1$ the order that is preserved is the order implied by the $l_2$ distance from $q\in[0,1]^n$ to the prediction $p\in[0,1]^n$.
\end{itemize}
    \item\label{BS_dependence} \textbf{Dependence of expectation and distribution of Brier score on true probabilities:} 
The Brier score for multiple observations, as defined in \eqref{Brier}, is the mean of the individual (one-dimensional) Brier scores. Its expectation and distribution depend on the true outcome probabilities $q_i$, which are generally unknown.

As illustrated by \eqref{Expectation_Brier_perf} and Figure \ref{fig:best_exp}, the expected Brier score under perfect predictions varies with the distribution of the $q_i $. For example, if the $ q_i $ are mostly concentrated near 0 or 1, the expected Brier score is close to 0. In contrast, if the $ q_i $ cluster around 0.5, the expected Brier score under perfect predictions approaches 0.25.

    \item\label{BS_unobservable} \textbf{Unobservability of true probabilities in clinical data:} In practice, the true probability of an event occurring for an individual patient is not observable. Each patient is unique, and we only observe whether the event occurs or not (i.e., a binary outcome). This is in contrast to simulation settings where one can compare the true probabilities to estimated probabilities of regression or ML models, i.e. as done here \cite{van_der_ploeg_modern_2014,hoessly_study_prot}.
\end{enumerate}
While individual-level probabilities will remain unknown, we can approximate their average value across a population by calculating the mean of observed outcomes overall, or in similar patient groups. This can be used to assess model calibration in clinical settings via calibration in the large (CIL) \cite{ICI_co,van_calster_calibration_2019}
    \begin{equation}\label{CIL}
    CIL(p, y) = \frac{1}{n} \sum_{i=1}^{n} p_i -\frac{1}{n} \sum_{i=1}^{n} y_i .
\end{equation}

\subsection{Notable mathematical features about the Brier score} \label{prop_BS_math} 

Below, we summarise basic mathematical properties of the Brier score for the understanding of the reader. Mathematical details and arguments are given in Appendix \ref{math_exp}.  

Consider the Brier score of $n$ outcomes, where each $Y_i\sim Bern(q_i)$. Let the expected incidence be denoted by $\bar{q}$, i.e. $\bar{q}:=\frac{\sum_{i=1}^nq_i}{n}$. 
\begin{enumerate}  [label=\textbf{(M\Roman*)}]
\item\label{BS_fix_noninf} \textbf{Brier score of non-informative model that uses prevalence as prediction, i.e. $p=(\bar{y},\cdots,\bar{y})$:}
The Brier score of \eqref{Brier} with the non-informative mean as predictor $\bar{y}_v=(\bar{y},\cdots ,\bar{y})$, is given by $\bar{y}-\bar{y}^2$, i.e.
$$
BS(\bar{y}_v,y)= \bar{y}-\bar{y}^2.
$$
\item\label{BS_bound} \textbf{Bound on expectation of perfect prediction, i.e. for $p=(q_1,\cdots, q_n)$:}
The expected value of the average Brier score of \eqref{Brier} with the true probabilities as predictors, i.e., $p=(q_1,\cdots ,q_n)$, is bounded above by $\bar{q}-\bar{q}^2$, i.e.,
$$
\E[BS((q_1,\cdots ,q_n),(Y_1,\cdots,Y_n))]\leq \bar{q}-\bar{q}^2,
$$
with equality if and only if $q_i=\bar{q}$ for all $i\in[n]$. 

\item\label{BS_CLT} \textbf{Typical Brier score with perfect prediction for large $n$:}
Assume the true probabilities $q_i$ itself are realisations of random variables $Q_i\sim F$. Then, by the law of large numbers(LLN) the Brier score \eqref{Brier} for $n$ big roughly equals $\E[BS(Q_1,Y_1)]$, and probabilities for deviation from this value can be calculated via the central limit theorem (CLT). Hence $\E[BS(Q_1,Y_1)]$ roughly equals $BS((q_1,\cdots ,q_n),(y_1,\cdots,y_n))]$ for $n$ large, and $\bar{y}$ roughly equals $\bar{q}$. Similarly if the predictions are given by $Q_1$ plus some iid error, the LLN and CLT appliy, and, e.g., tails can be analysed via large deviations theory. More detail on this perspective is in Appendix $\S$ \ref{LLN_CLT}.

\end{enumerate}  

\section{Why we care about the expectation: law of large numbers and central limit theorem}  \label{LLN_CLT}
Another way to understand the behavior of the Brier score for perfect predictions is through basic mathematical tools. Consider the setting of the simulations considered, where the true probability $q_i$ itself is a realisation of a random variable $Q_i\sim F$. Denote by $J_n=(Q_1,\cdots,Q_n)$ the random vector of the first $n$ true probabilities and $Z_n=(Y_1,\cdots,Y_n)$ the realisation of the corresponding $n$ Bernoulli random variables. We can consider the Brier score of the optimal predictor $J_n$ as
$$BS(J_n,Z_n)\xrightarrow[]{n\to\infty}\mathbb{E}[(Q_1 - Y_1)^2]$$
For sufficiently large $n$, the observed Brier score provides a stable estimate of its expectation by the LLN \cite{georgii2008stochastics}.

Furthermore, the individual terms $(Q_i - Y_i)^2$ have finite variance and hence by the CLT the distribution of the Brier score, when properly normalised, approaches a normal distribution.
\begin{equation}
    \sqrt{n} \left( BS(J_n,Z_n) - \mathbb{E}[(Q_1 - Y_1)^2] \right) \xrightarrow[]{d} \mathcal{N}(0, \sigma^2),
\end{equation}

where $\sigma^2$ represents the variance of $(Q_1 - Y_1)^2$, and $\xrightarrow[]{d}$ indicated convergence in distribution  \cite{georgii2008stochastics}. This asymptotic normality allows for understanding the asymptotic behaviour of perfect predictions in the context of the simulations.
\section{Connections to some other scores}\label{connections}
We mention three other scores connected to Brier score. The Brier score from \eqref{Brier}  equals mathematically the MSE, hence its square root is the root mean squared error (RMSE):
\begin{equation}\label{RMSE}
    RMSE(p,y):=\sqrt{BS(p, y)}.
\end{equation}
As the square root on $[0,1]$ is order preserving, i.e., if $a,b\in[0,1], a\leq b$ then $\sqrt{a}\leq\sqrt{b}$, and bijective, RMSE also takes values in $[0,1]$ and most of the observations we made apply sililarly to RMSE.

Two similar and often-used scores are the mean-absolute error (MAE) that is defined as
\begin{equation}\label{MAE}
        MAE(p, y) = \frac{1}{n} \sum_{i=1}^{n} |p_i - y_i|,
\end{equation}
and the CIL \cite{steyerberg2019clinical}
\begin{equation}\label{CIL_2}
        CIL(p, y) = \frac{1}{n} \sum_{i=1}^{n} p_i - y_i.
\end{equation}
These relate to Brier score (or RMSE) through the following inequalities \cite{Amann_Escher1}
\begin{equation}\label{inequality1}
        CIL(p, y)\leq    MAE(p, y)\leq    RMSE(p, y)
\end{equation}
as well as 
\begin{equation} \label{inequality2}
        MSE(p, y)\leq    MAE(p, y)\leq     RMSE(p, y).
\end{equation}

\section{More detail for mathematical understanding Brier score}

\subsection{Expectation of Brier score in one dimension}\label{exp_BS}
    Consider $Y_1\sim Bern(q_1),q_1\in[0,1]$ and one prediction $p_1\in[0,1]$, then the expected Brier score is given as 
   $$
    \E[BS(p_1, Y)] =p_1^2 - 2p_1q_1 + q_1.
$$
In order to derive the above formula, we can proceed as follows.The expected Brier score is 

\[
\E[BS(p_1, Y_1)] = \E[(p_1 - Y_1)^2].
\]

We can expand the terms to get

\[
\E[(p_1 - Y_1)^2] = \E[p_1^2 - 2p_1Y_1 + Y_1^2].
\]

Since \( p_1,p_1^2 \) are constants, we get:

\[
\E[p_1^2 - 2p_1Y_1 + Y_1^2] = p_1^2 - 2p_1\E[Y_1] + \E[Y_1^2].
\]

For a Bernoulli-distributed variable \( Y_1 \):

\[
\E[Y_1] = q_1, \quad \E[Y_1^2] = \E[Y_1] = q_1.
\]

Hence the expected Brier score is:

\[
\E[BS(p_1, Y_1)] = p_1^2 - 2p_1q_1 + q_1.
\]

\subsection{Differences in expectation of Brier score in one dimension}\label{exp_BS_diff}
Recall that
\[
g(p_1, q_1) = p_1^2 - 2p_1q_1 + q_1.
\]
\begin{itemize}
\item
To derive the first result, we compute:
\[
 g(q_1+\varepsilon, q_1+\varepsilon)-g(q_1, q_1)
= q_1^2 + 2q_1\varepsilon + \varepsilon^2 - 2(q_1^2 + 2q_1\varepsilon + \varepsilon^2) + q_1 + \varepsilon-q_1+q_1^2
\]
Thus,
\[
g(q_1, q_1) - g(q_1+\varepsilon, q_1+\varepsilon) = \varepsilon(1-2q_1-\varepsilon) 
\]
\item For the second expression,
\[
  g(q_1+\varepsilon, q_1)-g(q_1, q_1)
= -q_1^2 - 2q_1\varepsilon - \varepsilon^2 + 2q_1^2 + 2q_1\varepsilon - q_1
\]
\[
= q_1^2 - \varepsilon^2 - q_1
\]
Thus, the difference:
\[
 g(q_1+\varepsilon, q_1)-g(q_1, q_1)  = \varepsilon^2
\]
\end{itemize}
\subsection{Understanding the expectation of the general Brier score} \label{math_exp} 
In case $p_1=q_1$, the expectation value of Brier score is given by
     \begin{equation}
f(p_1)=p_1-p_1^2.
\end{equation}
This is a strictly concave function, meaning that for any $\alpha\in[0,1]$ and any $x,y\in[0,1]$,
\begin{equation}\label{concave}
f((1-\alpha)x+\alpha y)\geq (1-\alpha) f(x)+\alpha f(y)
\end{equation}
Now we come back to the case where we have $ n$ observations and we consider the Brier score. 
\begin{lemma}
Consider the Brier score of $n$ outcomes, where $\frac{\sum_{i=1}^ny_i}{n}=\bar{y}$. Then the Brier score of \eqref{Brier} with prevalence as predictors, i.e., $(p_1,\cdots ,p_n)=(\bar{y},\cdots ,\bar{y})$ equals $\bar{y}-\bar{y}^2$, i.e.
$$
BS((\bar{y},\cdots ,\bar{y}),(y_1,\cdots,y_n))=\bar{y}-\bar{y}^2
$$
\end{lemma}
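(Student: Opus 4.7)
The plan is to expand the squared terms in the Brier score directly and use the fact that the $y_i$ are binary, so $y_i^2 = y_i$. This reduces everything to expressions involving $\bar{y}$.

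First I would write out the definition of the Brier score with the constant predictor $\bar{y}$ substituted in:
\[
BS((\bar{y},\ldots,\bar{y}),(y_1,\ldots,y_n)) = \frac{1}{n}\sum_{i=1}^n (\bar{y}-y_i)^2.
\]
Then I would expand the square as $(\bar{y}-y_i)^2 = \bar{y}^2 - 2\bar{y} y_i + y_i^2$ and distribute the sum, giving three terms: $\bar{y}^2$ (constant), $-2\bar{y}\cdot \tfrac{1}{n}\sum y_i$, and $\tfrac{1}{n}\sum y_i^2$.

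The key observation, which I would state explicitly, is that since each $y_i \in \{0,1\}$, we have $y_i^2 = y_i$, and therefore $\tfrac{1}{n}\sum y_i^2 = \tfrac{1}{n}\sum y_i = \bar{y}$. Substituting this along with $\tfrac{1}{n}\sum y_i = \bar{y}$ into the expansion yields
\[
BS((\bar{y},\ldots,\bar{y}),(y_1,\ldots,y_n)) = \bar{y}^2 - 2\bar{y}\cdot \bar{y} + \bar{y} = \bar{y} - \bar{y}^2,
\]
which is exactly the claim.

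There is no real obstacle here — the lemma is essentially a short algebraic identity whose only nontrivial ingredient is the idempotence $y_i^2 = y_i$ of binary outcomes. If I wanted to make the exposition cleaner, I could note that the identity $\tfrac{1}{n}\sum(\bar{y}-y_i)^2 = \tfrac{1}{n}\sum y_i^2 - \bar{y}^2$ (the usual variance formula) plus $y_i^2 = y_i$ gives the same conclusion in one line; this is worth mentioning as a sanity check, since it shows the quantity $\bar{y}-\bar{y}^2$ is literally the sample variance of the outcomes.
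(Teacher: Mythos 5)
Your proof is correct, but it takes a different (equally elementary) route from the paper. The paper's proof is a counting argument: it writes $n=a+b$ where $a$ is the number of outcomes equal to $1$ and $b$ the number equal to $0$, so that $\bar{y}=a/(a+b)$, groups the summands by outcome value, and evaluates $\bar{y}(\bar{y}-1)^2+(1-\bar{y})\bar{y}^2=\bar{y}-\bar{y}^2$. You instead expand $(\bar{y}-y_i)^2$ algebraically and invoke the idempotence $y_i^2=y_i$ of binary outcomes to collapse the three resulting sums. Both arguments use the binary nature of the $y_i$ --- the paper implicitly, via the two-group partition, and you explicitly, via $y_i^2=y_i$. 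Your version has the small advantage of making the identification with the sample variance $\tfrac{1}{n}\sum_i(y_i-\bar{y})^2=\bar{y}(1-\bar{y})$ transparent, which is a useful interpretive remark given the paper's later use of $\bar{y}-\bar{y}^2$ as a non-informative reference value; the paper's version avoids any appeal to $y_i^2=y_i$ by working directly with counts. There is no gap in either argument.
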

\begin{proof}
Let $n=a+b$ such that $\bar{y}=\frac{a}{a+b}$, and Brier score is given as
$$
\frac{1}{a+b}(a(\frac{a}{a+b}-1)^2+b(\frac{a}{a+b})^2)
$$
which we rewrite as 
$$
\frac{a}{a+b}(\frac{a}{a+b}-1)^2-(\frac{a}{a+b}-1)(\frac{a}{a+b})^2)=\bar{y}(\bar{y}-1)^2+(1-\bar{y})\bar{y}^2=\bar{y}-\bar{y}^2
$$
\end{proof}

\begin{lemma}
Consider the Brier score of $n$ outcomes, where $\frac{\sum_{i=1}^nq_i}{n}=\bar{q}$. Then the expected value of the average Brier score of \eqref{Brier} with the true probabilities as predictors, i.e., $(p_1,\cdots ,p_n)=(q_1,\cdots ,q_n)$, is bounded above by $\bar{q}-\bar{q}^2$, i.e.
$$
\E[BS((q_1,\cdots ,q_n),(Y_1,\cdots,Y_n))]\leq \bar{q}-\bar{q}^2,
$$
with equality if and only if $q_i=\bar{q}$ for all $i\in[n]$. 
\end{lemma}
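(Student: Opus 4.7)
The plan is to reduce the claim to a one-dimensional concavity statement. By linearity of expectation and independence of the $Y_i$, the expected Brier score decomposes as
\begin{equation*}
\E[BS((q_1,\cdots,q_n),(Y_1,\cdots,Y_n))] = \frac{1}{n}\sum_{i=1}^n \E[(q_i - Y_i)^2] = \frac{1}{n}\sum_{i=1}^n f(q_i),
\end{equation*}
where $f(q_i) = q_i - q_i^2$ is exactly the expectation of the single-observation Brier score under perfect prediction computed in \eqref{Expectation_Brier_perf}. So the first step is to just invoke that one-dimensional formula termwise.

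Next, I would invoke the strict concavity of $f$ on $[0,1]$, which has already been recorded in \eqref{concave}. Jensen's inequality then yields
\begin{equation*}
\frac{1}{n}\sum_{i=1}^n f(q_i) \;\leq\; f\!\left(\frac{1}{n}\sum_{i=1}^n q_i\right) \;=\; f(\bar q) \;=\; \bar q - \bar q^2,
\end{equation*}
which is exactly the inequality in the statement.

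For the equality case, I would rely on the strict concavity of $f$: a finite Jensen inequality for a strictly concave function is an equality only when the inputs are all equal, i.e. $q_1=\cdots=q_n$, in which case their common value is $\bar q$. Conversely, if $q_i=\bar q$ for every $i$, all terms in the sum equal $f(\bar q)$ and equality holds trivially. I do not expect a serious obstacle here; the only care needed is to state the equality case of Jensen for a strictly concave function correctly. If desired, one could avoid invoking Jensen as a black box and instead prove the inequality directly by expanding $\frac{1}{n}\sum_i(q_i-q_i^2) - (\bar q - \bar q^2) = -\frac{1}{n}\sum_i (q_i-\bar q)^2 \leq 0$, which also makes the equality condition $q_i=\bar q$ transparent.
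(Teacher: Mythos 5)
Your proof is correct and is essentially the paper's argument: the paper also reduces to $\E[BS]=\bar q-\tfrac{1}{n}\sum_i q_i^2$ and applies Jensen's inequality (to the convex map $x\mapsto x^2$ rather than to the concave $f(q)=q-q^2$, which is the same step since the linear term passes through Jensen unchanged), with the equality case likewise read off from strictness. Your closing remark that one can bypass Jensen via the identity $\tfrac{1}{n}\sum_i f(q_i)-f(\bar q)=-\tfrac{1}{n}\sum_i(q_i-\bar q)^2$ is a valid and slightly more elementary variant, but not a different proof in substance.
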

\begin{proof}
Let $n\geq1$. Then
$$\E[BS((q_1,\cdots ,q_n),(Y_1,\cdots,Y_n))]=\frac{1}{n}(\sum_{i=1}^n q_i-q_i^2).$$
We can rewrite this as
$$=\frac{1}{n}\sum_{i=1}^n q_i-\frac{1}{n}\sum_{i=1}^n q_i^2=\bar{q}-\frac{1}{n}\sum_{i=1}^n q_i^2.$$
As the function $x\to x^2$ is strictly convex, we can apply Jensens Inequality to get 
$$\frac{1}{n}\sum_{i=1}^n q_i^2\geq (\frac{1}{n}\sum_{i=1}^n q_i)^2=\bar{q}^2,$$
such that finally we can bound it as
$$\bar{q}-\frac{1}{n}\sum_{i=1}^n q_i^2\geq \bar{q}-\bar{q}^2.$$
The statement with equality if and only if  $q_i=\bar{q}$ for all $i\in[n]$ also follows from Jensen.
\end{proof}

Another simple observation, with proof here is the following
\begin{lemma}
Consider the Brier score of $n$ outcomes, where $\frac{\sum_{i=1}^nq_i}{n}=c$. Then the expected value of the average Brier score of \eqref{Brier} with the non-informative mean as predictors, i.e., $(p_1,\cdots ,p_n)=(c,\cdots ,c)$, is given by $c-c^2$, i.e.
$$
\E[BS((c,\cdots ,c),(Y_1,\cdots,Y_n))]= c-c^2
$$
\end{lemma}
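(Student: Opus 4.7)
The plan is to reduce the $n$-dimensional expected Brier score to a sum of one-dimensional expected Brier scores using linearity of expectation, then apply the closed-form expression from \eqref{Expectation_Brier} that has already been derived in Appendix $\S$ \ref{exp_BS}. Since the predictor is the constant vector $(c,\ldots,c)$ and the $Y_i$ are independent Bernoulli with parameters $q_i$, the term-by-term structure of \eqref{Brier} makes this decomposition immediate.

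Concretely, I would first write
\[
\E[BS((c,\ldots,c),(Y_1,\ldots,Y_n))] = \frac{1}{n}\sum_{i=1}^n \E[(c - Y_i)^2],
\]
using linearity of expectation. Next, I would apply \eqref{Expectation_Brier} with $p_1 = c$ and $q_1 = q_i$ to obtain $\E[(c - Y_i)^2] = c^2 - 2c q_i + q_i$ for each $i$. Substituting this back in and distributing the average over the three terms gives
\[
\E[BS((c,\ldots,c),(Y_1,\ldots,Y_n))] = c^2 - 2c \cdot \frac{1}{n}\sum_{i=1}^n q_i + \frac{1}{n}\sum_{i=1}^n q_i.
\]

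Finally, I would invoke the hypothesis $\frac{1}{n}\sum_{i=1}^n q_i = c$ to simplify the right-hand side to $c^2 - 2c \cdot c + c = c - c^2$, which is the desired identity. There is no real obstacle here; the only thing to be careful about is that this lemma differs from \ref{BS_fix_noninf} in that the prediction is the \emph{expected} mean incidence $c$ rather than the \emph{observed} mean incidence $\bar y$, so one must use \eqref{Expectation_Brier} and not the deterministic algebraic identity from the first lemma of the appendix.
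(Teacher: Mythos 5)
Your proof is correct and follows essentially the same route as the paper: both decompose the expected Brier score via linearity of expectation, apply the one-dimensional formula \eqref{Expectation_Brier} with $p_1=c$ termwise, and then use the hypothesis $\frac{1}{n}\sum_{i=1}^n q_i=c$ to collapse the sum to $c-c^2$. Your closing remark distinguishing this lemma (prediction equal to the \emph{expected} incidence $c$) from the deterministic identity for the \emph{observed} incidence $\bar y$ is an accurate and useful clarification, but the argument itself matches the paper's.
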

\begin{proof}
Using \eqref{Expectation_Brier} and \eqref{Brier} we get
$$
\E[BS((c,\cdots ,c),(Y_1,\cdots,Y_n))]=\frac{1}{n}\sum_{i=1}^n c^2 - \frac{1}{n}\sum_{i=1}^n 2cq_i + \frac{1}{n}\sum_{i=1}^n q_i
$$
which we can simplify using $\frac{\sum_{i=1}^nq_i}{n}=c$ to get
$$
c^2 - 2c\frac{1}{n}\sum_{i=1}^n q_i + \frac{1}{n}\sum_{i=1}^n q_i = c^2 - 2c^2+ c=c-c^2,
$$
which is what we wanted to show.
\end{proof}

\section{Mathematical proof impossibility of Brier score 0}\label{math_proof}
Recall the assumption.
\begin{assumption}\label{ass}Assume at least one of the true probabilities $q_i$ are in $(0,1)$.
\end{assumption}
\begin{lemma}
Let $n\in\N_{\geq 1}$, and let $y=(y_1,\cdots,y_n)$ be a realisation of a sequence of independent random variables, where $Y_i\sim Bern(q_i)$. If assumption \ref{ass} holds, the Brier score of the perfect prediction $p_{perf}=(q_1,\cdots, q_n)$ is bigger than zero, i.e.,
$$BS(p_{perf},y)>0.$$
\end{lemma}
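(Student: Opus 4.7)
The plan is to show that the sum $\sum_{i=1}^n (q_i - y_i)^2$ defining $n \cdot BS(p_{perf}, y)$ contains at least one strictly positive term, which, together with non-negativity of the remaining terms, gives the claim. Since each summand $(q_i - y_i)^2 \geq 0$, the Brier score is non-negative, and strict positivity reduces to exhibiting a single index where $q_i \neq y_i$.

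First, I would invoke Assumption \ref{ass} to pick an index $i_0 \in \{1, \ldots, n\}$ with $q_{i_0} \in (0,1)$. Second, I would use the fact that $Y_{i_0} \sim \text{Bern}(q_{i_0})$ means the realization $y_{i_0}$ takes value in $\{0, 1\}$, not in the open interval $(0, 1)$. Consequently $y_{i_0} \neq q_{i_0}$, so $(q_{i_0} - y_{i_0})^2 > 0$. The remaining terms $(q_i - y_i)^2$ for $i \neq i_0$ are all at least zero, and hence
\[
BS(p_{perf}, y) = \frac{1}{n}\sum_{i=1}^n (q_i - y_i)^2 \geq \frac{1}{n}(q_{i_0} - y_{i_0})^2 > 0,
\]
which proves the lemma.

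There is no serious obstacle here; the statement is a direct consequence of the incompatibility between an open-interval probability and a discrete $\{0,1\}$-valued realization. The only ``care'' needed is to be explicit that the argument holds pointwise for any realization $y$ of the Bernoulli family (not merely in expectation), which is immediate because the chosen $y_{i_0}$ necessarily lies in $\{0,1\}$ regardless of the particular outcome drawn.
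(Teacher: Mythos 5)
Your proof is correct and follows essentially the same route as the paper: both isolate an index with $q_i\in(0,1)$, observe that the realized $y_i\in\{0,1\}$ cannot equal it, and conclude the corresponding squared term is strictly positive while the rest are non-negative. The only cosmetic difference is that the paper makes the bound quantitative by setting $\varepsilon=\min\{q_1,1-q_1\}$ and concluding $BS(p_{perf},y)\geq \varepsilon^2/n$, whereas you stop at strict positivity, which suffices for the statement.
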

\begin{proof}
Denote by $q=(q_1,\cdots,q_n)$ the vector of true probabilities, which also determines the perfect prediction vector $p_{perf}=q$. Assume we reordered them such that $q_1$ is in $(0,1)$, which holds by assumption assumption \ref{ass}. Define the following constant
$$\varepsilon:=\min\{q_1,1-q_1\}$$
By reordering and assumption, $\varepsilon>0$.
We can bound $BS(p_{perf},y)$ from below as follows
$$0<\frac{\varepsilon^2}{n}\leq BS(p_{perf},y).$$
\end{proof}

\newpage
\section{More details for the simulation via the ADEMP framework}\label{ADEMP}
We give more details on the simulations used in ADEMP framework \cite{ADEMP}. It took roughly 1h to run it on a Mac Studio M2 Max 2023.
\subsection{Aims}
\begin{itemize}
    \item The main aim is to compare Brier score across different settings both for true probabilities as well as predictions, which are based on the perfect prediction with some noise or bias added through median, quantile, and violin plots.
    \item A secondary aim is to compare Brier score to the mean incident Brier score, as well as to compare the CIL. In particular to estimate the probability that $\bar{y}-\bar{y}^2>BS_{perf}$, and to visualise the observed distribution of $\bar{y}-\bar{y}^2-BS_{perf}$ in different settings.
\end{itemize}

\subsection{Data Generating Mechanism}

\subsubsection{$y_i$ entered in the Brier score}
We sample true values for $q_i$ under some distributions which are subsequently used to simulate $Y_i \sim Bern(q_i)$. The sample distribution for $q_i$ are based on the following:
\begin{itemize}
    \item $Unif(a,b)$, where $(a,b) \in \{(0,1),(0,0.2)\}$.
    \item $Beta(\alpha,\beta)$, where $(\alpha,\beta) \in \{(2,5),(5,5),(3,3)\}$.
    \item Osteoporosis: Logistic regression model based on NHANES 2007/2008 \cite{NHANES} data using the nhanesA-package \cite{nhanesA} with complete case analysis; $7\%$ have osteoporosis.
    \begin{itemize}
        \item Outcome: Osteoporosis.
        \item Predictors: Vitamin D, calcium, weight, height, smoking, number of persons in household, age, US citizen status, education, gender. The following were considered nonlinear via rcs spline transformation with rms R-package \cite{rms_package} with 3 default knots:Vitamin D, calcium, age.
    \end{itemize}
    \item Smoking: Logistic regression model based on NHANES 2007/2008 \cite{NHANES} data using the nhanesA-package \cite{nhanesA} with complete case analysis; $26.3\%$ do smoke. 
    \begin{itemize}
        \item Outcome: Smoking.
        \item Predictors: Vitamin D, calcium, bmi, osteoporosis, number of persons in household, age, US citizen status, education, gender. The following were considered nonlinear via rcs spline transformation with rms R-package \cite{rms_package} with 3 default knots:Vitamin D, calcium, bmi, age.
    \end{itemize}
\end{itemize}

\subsubsection{Predictions $p_i$ entered in the Brier score}
The predictions entered are functions of the $q_i$, and the following settings are considered:
\begin{itemize}
    \item Perfect: $p_i=q_i$, i.e., perfect predictions.
    \item $+0.1$: $p_i=q_i+0.1$, i.e., slightly biased predictions.
    \item $+Unif(-0.1,0.1)$ resp $+Unif(-0.05,0.05)$: $p_i=q_i+X_i$, where $X_i\sim Unif(-0.1,0.1)$ resp $Unif(-0.05,0.05)$, i.e., disturbed but unbiased predictions.
   \item $+(1-2Bern(1/2))\cdot 0.1$: $p_i=q_i+(1-2X_i)\cdot 0.1$, where $X_i\sim Bern(1/2)$, i.e., slightly disturbed but unbiased predictions.

\end{itemize}
In case the $p_i$ are smaller than zero then $p_i$ is set to zero, and if they are bigger than one set to one.

\subsubsection{Sample Size}
Sample sizes considered are $n=\{300,1000\}$.

\subsubsection{Number of DGM Scenarios and Simulation Runs}
\begin{itemize}
    \item $|\#\text{ options for  dist.Y}| \cdot |\#\text{ options for  dist. p}|= 7 \cdot 5 = 35$ scenarios.
    \item $N=5000$ simulation repetitions per scenario.
\end{itemize}

\subsection{Estimand/Target of Analysis}
\begin{itemize}
    \item Distribution, median, quantiles of observed Brier score.
\end{itemize}

\subsection{Methods}
\subsubsection{Basis of Simulations}
The simulation is run as a Monte-Carlo simulation in R, where for the two settings based on NHANES data, the $q_i$ are based on the predicted value of the logistic regression for the corresponding observation, and the $q_i$ are subsampled without replacement. For the other distributions the $q_i$ are sampled iid from the distribution.

\subsection{Performance Measures}
\begin{itemize}
    \item distribution, median, 5\% and 95\% quantile Brier score in violin plot.
        \item estimate the probability that $\bar{y}-\bar{y}^2>BS_{perf}$.
    \item violin plot of $\bar{y}-\bar{y}^2-BS_{perf}$.
\end{itemize}

\end{document}